\newcommand{\punt}[1]{}
\newcommand{\cmnt}[1]{}
\newtheorem{theorem}{Theorem}
\newtheorem{lemma}[theorem]{Lemma}
\newtheorem{corollary}[theorem]{Corollary}
\newtheorem{definition}{Definition}
\newenvironment{proof}[1][Proof]{\noindent\textbf{#1.} }{} %\rule{0.5em}{0.5em}\\}
\newcommand{\figref}[1]{Figure~\ref{fig:#1}}
\newcommand{\lemref}[1]{Lemma~\ref{lem:#1}}
\newcommand{\eqnref}[1]{Eqn(\ref{eq:#1})}
\newcommand{\theqed}{$\Box$}
\newcommand{\qed}{\hspace*{\fill}\theqed\\\vspace*{-0.5em}}
\newcommand{\Wset}{\textit{Wset}}
\newcommand{\ignore}[1]{}
\newcommand {\comm} {committed}
\newcommand {\aborted} {aborted}
\newcommand{\tobj} {transaction object}
\newcommand{\txns}{txns}
\newcommand{\evts}[1] {evts(#1)}
\newcommand{\ssch} {sub-history}
\newcommand{\tseq} {t-sequential}
\newcommand{\lastw} {lastWrite}
\newcommand{\lwrite}[2] {#2.lastWrite(#1)}
\newcommand{\vwrite} {valWrite}
\newcommand{\valw}[2] {#2.valWrite(#1)}
\newcommand{\valid} {valid}
\newcommand{\legal} {legal}
\newcommand{\op} {operation}
\newcommand{\termop} {terminal operation}
\newcommand{\cc} {correctness-criterion}
\newcommand{\tryc} {tryC}
\newcommand{\trya} {tryA}
\newcommand{\opq} {opaque}
\newcommand{\opty} {opacity}
\newcommand{\coop} {co-opaque}
\newcommand{\coopty} {co-opacity}
\newcommand{\mvop} {mvc-opaque}
\newcommand{\mvopty} {mvc-opacity}
\newcommand{\mvsr} {MVSR}
\newcommand{\csr} {CSR}
\newcommand{\mvs} {multi-version STM}
\newcommand{\multv} {multi-versioned}
\newcommand{\mvc} {multi-version conflict}
\newcommand{\mvco} {mvc}
\newcommand{\mvcg}[1] {MVCG(#1)}
\newcommand{\mvg} {multi-version conflict graph}
\begin{document}

\title{Multiversion Conflict Notion}

% ---- Author ----
%\author{Priyanka Kumar \\
%priyanka@iitp.ac.in \\
%Sathya Peri \\
%sathya@iitp.ac.in \\
%CSE Dept Indian Institute of Technology Patna\\
%India \\ 
%}

\authorinfo{Priyanka Kumar}
	{priyanka@iitp.ac.in}
	{Indian Institute of Technology Patna, India}

\authorinfo{Sathya Peri}
	{sathya@iitp.ac.in}
	{Indian Institute of Technology Patna, India}

%\authorinfo{Petr Kuznetsov (?)}
%	{petr.kuznetsov@telecom-paristech.fr}
%	{T\'el\'ecom ParisTech, France}

\maketitle              % typeset the title of the contribution
\thispagestyle{empty}

\begin{abstract}
This work introduces the notion of multi-version conflict notion. 
\end{abstract}

%\keywords
%Concurrency, correctness, atomic operation, opacity, conflict opacity, Multiversion Conflict, software transactional memory, timestamp, multiversion.

\section{Introduction}
\label{sec:intro}

In recent years, Software Transactional Memory systems (STMs) \cite{HerlMoss:1993:SigArch,ShavTou:1995:PODC} have garnered significant interest as an elegant alternative for addressing concurrency issues in memory. STM systems take optimistic approach. Multiple transactions are allowed to execute concurrently. On completion, each transaction is validated and if any inconsistency is observed it is \emph{aborted}. Otherwise it is allowed to \emph{commit}. 

An important requirement of STM system is to ensure that transactions do not abort unnecessarily. This referred to as the \emph{progress} condition. It would be ideal to abort a transaction only when it does not violate correctness requirement (such as opacity). However it was observed in \cite{attiyaHill:sinmvperm:tcs:2012} that many STM systems developed so far spuriously abort transactions even when not required. A \emph{permissive} STM \cite{Guer+:disc:2008} does not abort a transaction unless committing of it violates the \cc{}.

With the increase in concurrency, more transactions may conflict and abort, especially in presence many long-running transactions which can have a very bad impact on performance \cite{AydAbd:2008:Serial:transact}. Perelman et al \cite{Perel+:2011:SMV:DISC} observe that read-only transactions play a significant role in various types of applications. But long read-only transactions could be aborted multiple times in many of the current STM systems \cite{herlihy+:2003:stm-dynamic:podc,dice:2006:tl2:disc}. In fact Perelman et al \cite{Perel+:2011:SMV:DISC} show that many STM systems waste 80\% their time in aborts due to read-only transactions. 

It was observed that by storing multiple versions of each object, \mvs{s} can ensure that more read \op{s} succeed, i.e., not return abort. History $H1$ illustrates this idea. $H1:r_1(x, 0) w_2(x, 10) w_2(y, 10) c_2 r_1(y, 0) c_1$. In this history the read on $y$ by $T_1$ returns 0 instead of the previous closest write of 10 by $T_2$. This is possible by having multiple versions for $y$. As a result, this history is \opq{} with the equivalent correct execution being $T_1 T_2$. Had there not been multiple versions, $r_2(y)$ would have been forced to read the only available version which is 10. This value would make the read cause $r_2(y)$ to not be consistent (\opq) and hence abort. 

\begin{figure}[tbph]
\centerline{\scalebox{0.5}{\input{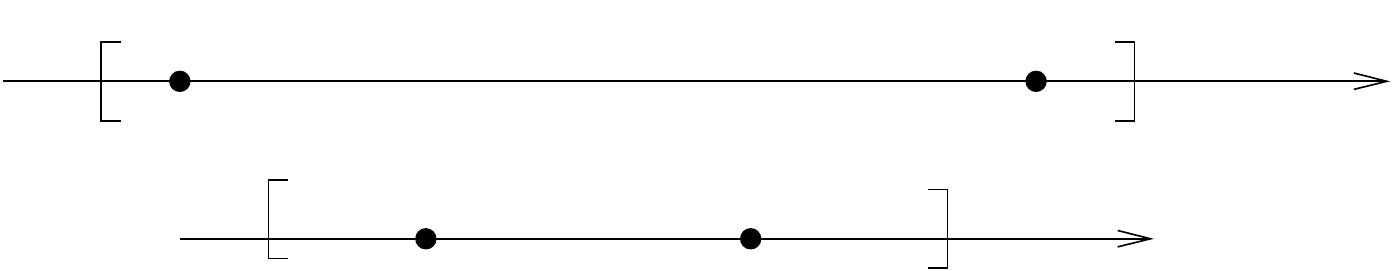_t}}}
\caption{Pictorial representation of a History $H1$}
\label{fig:ex1}
\end{figure}

\cmnt {   
\paragraph{Roadmap.} 
The paper is organized as follows. We describe our system model in
Section~\ref{sec:model}. In Section~\ref{sec:mcn} we formally define
the graph characterization for implementing the \emph{Conflict opacity}.
In Section~\ref{sec:algo}, we describe the working principle of MCN algorithm. In Section~\ref{sec:gar} we are collecting the garbage. Our results are summarized in Section~\ref{sec:conc}.
}

\section{System Model and Preliminaries}
\label{sec:model}

The notions and definitions described in this section follow the definitions of \cite{KuzSat:Corr:2012}. We assume a system of $n$ processes, $p_1,\ldots,p_n$ that access a
collection of \emph{objects} via atomic \emph{transactions}.
The processes are provided with  four \emph{transactional operations}: the
\textit{write}$(x,v)$ operation that updates object $x$ with value
$v$, the \textit{read}$(x)$ operation that returns a value read in
$x$, \textit{tryC}$()$ that tries to commit the transaction and
returns \textit{commit} ($c$ for short) or \textit{abort} ($a$ for
short), and \textit{\trya}$()$ that aborts the transaction and returns
$A$. The objects accessed by the read and write \op{s} are called as
\tobj{s}. For the sake of simplicity, we assume that the values written by all the transactions are unique. 

Operations \textit{write}, \textit{read} and \textit{\tryc}$()$ may
return $a$, in which case we say that the operations \emph{forcefully
abort}. Otherwise, we say that the operation has \emph{successfully}
executed.  Each operation is equipped with a unique transaction
identifier. A transaction $T_i$ starts with the first operation and
completes when any of its operations returns $a$ or $c$. 
Abort and commit \op{s} are called \emph{\termop{s}}. 

For a transaction $T_k$, we denote all its  read \op{s} as $Rset(T_k)$
and write \op{s} $Wset(T_k)$. Collectively, we denote all the \op{s}
of a  transaction $T_i$ as $\evts{T_k}$. 

\vspace{1mm}
\noindent
\textit{Histories.} A \emph{history} is a sequence of \emph{events}, i.e., a sequence of
invocations and responses of transactional operations. The collection
of events is denoted as $\evts{H}$. For simplicity, we only consider
\emph{sequential} histories here: the invocation of each transactional
operation is immediately followed by a matching response. Therefore,
we treat each transactional operation as one atomic event, and let
$<_H$ denote the total order on the transactional operations incurred
by $H$. With this assumption the only relevant events of a transaction
$T_k$ are of the types: $r_k(x,v)$, $r_k(x,A)$, $w_k(x, v)$, $w_k(x,
v,A)$, $\tryc_k(C)$ (or $c_k$ for short), $\tryc_k(A)$, $\trya_k(A)$ (or $a_k$ for short). 
We identify a history $H$ as tuple $\langle \evts{H},<_H \rangle$. 
%a read \op{} reading a value $v$ is denoted as $r_k(x,v)$, a write
%\op{} writing $v$ as $w_k(x, v)$,  commit \op{} of . 

\cmnt {
\begin{figure}[tbph]
\centerline{\scalebox{0.5}{\input{figs/ex1.pdf_t}}}
\caption{Pictorial representation of a History $H1$}
\label{fig:ex1}
\end{figure}
}

Let $H|T$ denote the history consisting of events of $T$ in $H$, and $H|p_i$ denote the history consisting of events of $p_i$ in $H$. We only consider \emph{well-formed} histories here, i.e., (1) each $H|T$ consists of  a read-only prefix (consisting of read operations only), followed by a write-only part (consisting of write operations only), possibly \emph{completed} with a $\tryc$ or $\trya$ operation\footnote{This restriction brings no loss of  generality~\cite{KR:2011:OPODIS}.}, and (2) each $H|p_i$ consists of a sequence of transactions, where no new transaction begins before the last transaction completes (commits or a aborts). 

We assume that every history has an initial committed transaction $T_0$ that initializes all the data-objects with 0. The set of transactions that appear in $H$ is denoted by $\txns(H)$. The set of committed (resp., aborted) transactions in $H$ is denoted by $\comm(H)$ (resp., $\aborted(H)$). The set of \emph{incomplete} (or \emph{live}) transactions in $H$ is denoted by $\emph{incomplete}(H)$ ($\emph{incomplete}(H)=\txns(H)-\comm(H)-\aborted(H)$). For a history $H$, we construct the \emph{completion} of $H$, denoted $\overline{H}$, by inserting $a_k$ immediately after the last event of every transaction $T_k \in \emph{incomplete}(H)$.

\vspace{1mm}
\noindent
\textit{Transaction orders.} For two transactions $T_k,T_m \in \txns(H)$, we say that  $T_k$ \emph{precedes} $T_m$ in the \emph{real-time order} of $H$, denote $T_k\prec_H^{RT} T_m$, if $T_k$ is complete in $H$ and the last event of $T_k$ precedes the first event of $T_m$ in $H$. If neither $T_k\prec_H^{RT} T_m$ nor $T_m \prec_H^{RT} T_k$, then $T_k$ and $T_m$ \emph{overlap} in $H$. A history $H$ is \emph{\tseq{}} if there are no overlapping transactions in $H$, i.e., every two transactions are related by the real-time order. 

\cmnt {
\vspace{1mm}
\noindent
\textit{Sub-histories.} A \textit{sub-history}, $SH$ of a history
$H$ denoted as the tuple $\langle \evts{SH},$ $<_{SH}\rangle$ and is
defined as: (1) $<_{SH} \subseteq <_{H}$; (2) $\evts{SH} \subseteq
\evts{H}$; (3) If an event of a transaction $T_k\txns(H)$ is in $SH$ then all
the events of $T_k$ in $H$ should also be in $SH$. 
%[[PK sounds weird, complete in which sense?
%The condition (3) implies
%that a \ssch{} is complete w.r.t the transaction in it. 
%]]
For a history
$H$, let $R$ be a subset of $txns(H)$, the transactions in
$H$. 
Then $\shist{R}{H}$ denotes  the \ssch{} of $H$ that is
formed  from the \op{s} in $R$. 
}

\vspace{1mm}
\noindent
\textit{Valid, legal and multi-versioned histories.} Let $H$ be a history and $r_k(x, v)$ be a successful read {\op} (i.e $v \neq A$) in $H$. Then $r_k(x, v)$, is said to be \emph{\valid} if there is a transaction $T_j$ in $H$ that commits before $r_K$ and $w_j(x, v)$ is in $\evts{T_j}$. Formally, $\langle r_k(x, v)$  is \valid{} $\Rightarrow \exists T_j: (c_j <_{H} r_k(x, v)) \land (w_j(x, v) \in \evts{T_j}) \land (v \neq A) \rangle$. We say that the commit \op{} $c_j$ is $r_k$'s \emph{\vwrite} and formally denote it as $\valw{r_k}{H}$. If there are multiple such committed transactions that write $v$ to $x$, then $r_k$ \vwrite{} is the commit \op{} closest to $r_x$. The history $H$ is \valid{} if all its successful read \op{s} are \valid. 

%Consider a history $H$. For a read $r_i(x,v)$ in $H$, we define $r_i$'s \emph{\vwrite}, formally $\valw{r_i}{H}$, as the commit \op{} $c_j$ belonging to the transaction $T_j$ that occurs before $r_x$ in $H$ and writes $v$ to $x$. If there are multiple such committed transactions that write $v$ to $x$ then the \vwrite{} is the commit \op{} closest to $r_x$. We say that $T_k$ and $T_j$ have a \textit{reads-from} relation in $H$. 

We define $r_k(x, v)$'s \textit{\lastw{}} as the latest commit event $c_i$ such that $c_i$ precedes $r_k(x, v)$ in $H$ and $x\in\Wset(T_i)$ ($T_i$ can also be $T_0$). Formally, we denote it as $\lwrite{r_k}{H}$. A successful read \op{} $r_k(x, v)$ (i.e $v \neq A$), is said to be \emph{\legal{}} if transaction $T_i$ (which contains  $r_k$'s \lastw{}) also writes $v$ onto $x$. Formally, $\langle r_k(x, v)$ \text{is \legal{}} $\Rightarrow (v \neq A) \land (\lwrite{r_k(x, v)}{H} = c_i) \land (w_i(x,v) \in \evts{T_i}) \rangle$.  The history $H$ is \legal{} if all its successful read \op{s} are \legal. Thus from the definitions we get that if $H$ is \legal{} then it is also \valid.

\figref{ex1} shows a pictorial representation of a history $H1:r_1(x, 0) w_2(x, 10) w_2(y, 10) c_2 r_1(y, 0) c_1$. It can be seen that in $H1$, $c_0 = \valw{r_1(x,0)}{H1} = \lwrite{r_1(x,0)}{H1}$. Hence, $r_1(x,0)$ is \legal. But $c_0 = \valw{r_1(y,0)}{H1} \neq c_1 = \lwrite{r_1(y,0)}{H1}$. Thus, $r_1(y,0)$ is \valid{} but not \legal{}

We define a history $H$ as \textit{\multv} if it is \valid{} but \textbf{not} \legal. If a hisory $H$ is \multv{}, then there is at least one read, say $r_i(x)$ in $H$ that is \valid{} but not \legal. The history $H1$ is \multv. Along the same lines, we say that a STM implementation is \multv{} if it exports atleast one history that is \multv. 

\vspace{1mm}
\noindent
\textit{Opacity.} We say that two histories $H$ and $H'$ are \emph{equivalent} if they have the same set of events. Now a history $H$ is said to be \textit{opaque} \cite{GuerKap:2008:PPoPP,tm-book} if $H$ is \valid{} and there exists a \tseq{} legal history $S$ such that (1) $S$ is equivalent to $\overline{H}$ and (2) $S$ respects $\prec_{H}^{RT}$, i.e $\prec_{H}^{RT} \subseteq \prec_{S}^{RT}$. By requiring $S$ being equivalent to $\overline{H}$, opacity treats all the incomplete transactions as aborted. 

\cmnt {
Along the same lines, a \valid{} history $H$ is said to be
\textit{strictly serializable} if $\shist{\comm(H)}{H}$ is opaque.
%there exists a legal \tseq{}
%history $S$ such that condition (1) of opacity definition is replaced
%with $S$ is equivalent to $\shist{\comm(H)}{H}$. 
Thus, unlike opacity, strict serializability does not include aborted
transactions in the global serialization order.

\vspace{1mm}
\noindent
\textit{Conflict Opacity:} A history $H$ is said to be Conflict Opaque(CO), if $\exists S$(serial history): (1) Conf(H) $\approx$ Conf(S), (2) S is legal and (3) $\prec_H^{RT} \approx \prec_S^{RT}$.

\vspace{1mm}
\noindent
\textit{Live Set:} For a given transaction $T_i$, when $T_i$ commits/aborts, at that time all the transactions which are still alive called live set of $T_i$.
}

\section{New Conflict Notion for Multi-Version Systems}
\label{sec:mcn}

\subsection{Motivation for a New Conflict Notion}
\label{subsec:motive}

It is not clear if checking whether a history is opaque or can be performed in polynomial time. Checking for membership of \textit{multi-version view-serializability (\mvsr)} \cite[chap. 3]{WeiVoss:2002:Morg}, the correctness criterion for databases, has been proved to be NP-Complete \cite{PapadKanel:1984:MultVer:TDS}. We believe that the membership of opacity, similar to \mvsr, can not be efficiently verified.
%Papad:1979:JACM, Vidya:1987:AINF

In databases a sub-class of \mvsr, \textit{conflict-serializability (\csr)} \cite[chap. 3]{WeiVoss:2002:Morg} has been identified, whose membership can be efficiently verified. As a result, \csr{} is the commonly used correctness criterion in databases since it can be efficiently verified. In fact all known single-version schedulers known for databases are a subset of \csr. Similarly, using the notion of conflicts, a sub-class of opacity, \textit{conflict-opacity (\coopty)} can be designed whose membership can be verified in polynomial time. Further, using the verification mechanism, an efficient STM implementation can be designed that is permissive w.r.t \coopty{} \cite{KuzSat:Corr:2012}.

By storing multiple versions for each \tobj, multi-version STMs provide more concurrency than single-version STMs. But the main drawback of \coopty{} is that it does not admit histories that are \multv. In other words, the set of histories exported by any STM implementation that uses multiple versions is not a subset of \coopty. Thus it can be seen that the set \coopty{} does not take advantage of the concurrency provided by using multiple versions. As a result, it is not clear if a multi-version STM implementation can be developed that is permissive w.r.t some sub-class of opacity. In the rest of this sub-section, we formally define \coopty{} and prove this result. 

The following definitions and proofs are coming directly from \cite{KuzSat:Corr:2012}. We define \coopty{} using \textit{conflict order}~\cite[Chap. 3]{WeiVoss:2002:Morg} as: For two transactions $T_k$ and $T_m$ in $\txns(H)$, we say that \emph{$T_k$  precedes $T_m$ in conflict order}, denoted $T_k \prec_H^{CO} T_m$, if (w-w order) $\tryc_k(C)<_H \tryc_m(C)$ and $Wset(T_k) \cap Wset(T_m) \neq\emptyset$, (w-r order) $\tryc_k(C)<_H r_m(x,v)$, $x \in Wset(T_k)$ and $v \neq A$, or (r-w order) $r_k(x,v)<_H \tryc_m(C)$, $x\in Wset(T_m)$ and $v \neq A$.

Thus, it can be seen that the conflict order is defined only on \op{s} that have successfully executed. Using conflict order, \coopty{} is defined as follows: . 

\begin{definition}
\label{def:coop1}
A history $H$ is said to be \emph{conflict opaque} or \emph{\coop} if $H$ is \valid{} 
and there exists a t-sequential legal history $S$ such that (1) $S$ is equivalent to
$\overline{H}$ and (2) $S$ respects $\prec_{H}^{RT}$ and $\prec_{H}^{CO}$. 
\end{definition}
Readers familiar with the databases literature may notice \coopty{} is analogous to the \emph{order commit conflict serializability} (OCSR)~\cite{WeiVoss:2002:Morg}.

\begin{lemma}
\label{lem:co-eq}
Consider two histories $H1$ and $H2$ such that $H1$ is equivalent to
$H2$ and $H1$ respects conflict order of $H2$, i.e., $\prec_{H1}^{CO} \subseteq \prec_{H2}^{CO}$. Then, $\prec_{H1}^{CO} = \prec_{H2}^{CO}$. 
\end{lemma}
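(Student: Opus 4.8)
The plan is to prove the two inclusions separately: one direction is the hypothesis, so it remains to establish $\prec_{H2}^{CO} \subseteq \prec_{H1}^{CO}$. The engine of the argument is that equivalence of $H1$ and $H2$ means they consist of exactly the same events, and therefore have the same write sets $Wset(\cdot)$ and the same successful reads. Whether an \emph{ordered} pair of operations forms a conflict of a given kind (w-w, w-r, r-w) factors into a \emph{static} condition (the two operations touch a common object, are of the right types, the write sets overlap, and $v \neq A$) and a \emph{directional} condition (which of the two operations occurs first). The static condition refers only to the events present, so the set of candidate conflicting pairs is identical in $H1$ and $H2$; the two conflict orders can differ at most in how these common candidate pairs are oriented. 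This reduces the lemma to an orientation-consistency statement.

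With this in hand I would argue as follows. Fix an edge $T_k \prec_{H2}^{CO} T_m$ and let $a \in \evts{T_k}$ and $b \in \evts{T_m}$ be the pair of operations witnessing it, so that $a <_{H2} b$ and $\{a,b\}$ satisfies the static part of one of the three clauses. By the previous paragraph $\{a,b\}$ is a candidate conflicting pair of $H1$ as well, and since $H1$ is sequential, $<_{H1}$ totally orders $a$ and $b$. If $a <_{H1} b$, the same clause gives $T_k \prec_{H1}^{CO} T_m$, as desired. If instead $b <_{H1} a$, then $\{a,b\}$ witnesses the reversed edge $T_m \prec_{H1}^{CO} T_k$, and the hypothesis $\prec_{H1}^{CO} \subseteq \prec_{H2}^{CO}$ forces $T_m \prec_{H2}^{CO} T_k$ as well.

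The step I expect to be the main obstacle is precisely the disposal of this second case: I must show that the witnessing conflict cannot be oriented one way in $H2$ and the opposite way in $H1$. This is where \validity{} and \legality{} enter, together with the standing assumption that all written values are unique. Because values are unique, a successful read $r_m(x,v)$ has a single possible writer, so for a pair $\{a,b\}$ that directly relates a committing write to the read it feeds, \legality{} pins the relative order of that commit and the read identically in every equivalent history, immediately contradicting $a <_{H2} b$ against $b <_{H1} a$. The delicate part is a conflict whose writing transaction is \emph{not} the one the read reads from (a ``third-party'' writer), where the local order is not fixed by a single reads-from edge; here I would show that the \legality{} constraints forced by the genuine reads-from edges already order the two transactions, so that the only orientation consistent with $\prec_{H1}^{CO} \subseteq \prec_{H2}^{CO}$ is the same in both histories. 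This propagation of forced orderings, rather than any single local check, is the crux. Establishing it yields $T_k \prec_{H1}^{CO} T_m$ in all cases, completing the reverse inclusion and hence $\prec_{H1}^{CO} = \prec_{H2}^{CO}$.
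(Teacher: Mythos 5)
Your first two paragraphs track the paper's proof exactly: equivalence fixes the set of conflicting pairs (the ``static'' part of each clause depends only on $\evts{H}$), so the two conflict orders can differ only in orientation, and the case $a <_{H1} b$ is immediate. The genuine gap is in your handling of the case $b <_{H1} a$, which you declare the main obstacle and propose to close with \legality, \validity{} and uniqueness of written values via a ``propagation of forced orderings.'' That route cannot work: legality is not a hypothesis of the lemma, and the lemma is applied in \lemref{multi-co} precisely to a history $H$ that is \valid{} but \emph{not} \legal, so any proof leaning on legality proves a weaker statement than the one the paper needs. Moreover the crux step of your third paragraph is only announced, never carried out. No such machinery is required: apply the inclusion hypothesis to the witnessing \emph{pair} rather than to a transaction-level edge. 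If $b <_{H1} a$, then $(b,a) \in \prec_{H1}^{CO}$, so by hypothesis $(b,a) \in \prec_{H2}^{CO}$, hence $b <_{H2} a$ --- directly contradicting $a <_{H2} b$, since $<_{H2}$ totally orders the events. This one-line disposal is the paper's entire argument.

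The slip happens when you coarsen from the pair $(b,a)$ to the transaction edge $T_m \prec_{H2}^{CO} T_k$: at transaction granularity the conflict relation is not antisymmetric, so $T_m \prec_{H2}^{CO} T_k$ coexists peacefully with $T_k \prec_{H2}^{CO} T_m$ and no contradiction is visible. Indeed, at transaction granularity the lemma is \emph{false}: for $H2 = r_k(x,0)\, w_m(x,5)\, w_m(y,5)\, c_m\, r_k(y,5)\, c_k$ and the equivalent $H1 = w_m(x,5)\, w_m(y,5)\, c_m\, r_k(x,0)\, r_k(y,5)\, c_k$, the transaction-level orders satisfy strict inclusion ($H1$ yields only $T_m \prec^{CO} T_k$, while $H2$ yields both directions), so the statement must be read --- as the paper's proofs of this lemma and of \lemref{eqv-legal} implicitly do --- at the level of ordered operation pairs. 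Your own first paragraph set it up at exactly that level; had you stayed there, the second case collapses instantly and your third paragraph becomes unnecessary.
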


\begin{proof}
Here, we have that $\prec_{H1}^{CO} \subseteq \prec_{H2}^{CO}$. In order to prove $\prec_{H1}^{CO} = \prec_{H2}^{CO}$, we have to show that $\prec_{H2}^{CO} \subseteq \prec_{H1}^{CO}$. We prove this using contradiction. Consider two events $p,q$ belonging to transaction $T1,T2$ respectively in $H2$ such that $(p,q) \in \prec_{H2}^{CO}$ but $(p,q) \notin \prec_{H1}^{CO}$. Since the events of $H2$ and $H1$ are same, these events are also in $H1$. This implies that the events $p, q$ are also related by $CO$ in $H1$. Thus, we have that either $(p,q) \in \prec_{H1}^{CO}$  or $(q,p) \in \prec_{H1}^{CO}$. But from our assumption, we get that the former is not possible. Hence, we get that $(q,p) \in \prec_{H1}^{CO} \Rightarrow (q,p) \in \prec_{H2}^{CO}$. But we already have that $(p,q) \in \prec_{H2}^{CO}$. This is a contradiction. 
\end{proof}

\begin{lemma}
\label{lem:eqv-legal}
Let $H1$ and $H2$ be equivalent histories such that 
%$H1$ is equivalent to $H2$ and the conflict orders of $H1$ and $H2$
%are the same, i.e. 
$\prec_{H1}^{CO} = \prec_{H2}^{CO}$. 
Then $H1$ is \legal{} iff $H2$ is \legal. 
\end{lemma}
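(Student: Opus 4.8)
The plan is to reduce legality of the whole history to a per-read statement and to show that the hypothesis $\prec_{H1}^{CO} = \prec_{H2}^{CO}$ forces every successful read to have the same lastWrite transaction in both histories. Since the two hypotheses are symmetric in $H1$ and $H2$, I would prove only the forward direction (if $H1$ is \legal{} then $H2$ is \legal{}) and invoke symmetry for the converse. Recall that whether an operation $w_i(x,v)$ lies in $\evts{T_i}$, and whether $x \in Wset(T_i)$, depends only on the set of events; since $H1$ and $H2$ are equivalent these facts are identical in both histories, so the only thing that can differ is the total order $<_H$, and the conflict order is exactly the part of that order that \legality{} cares about.

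Concretely, I would fix a successful read $r_k(x,v)$ (so $v \neq A$), which by equivalence occurs in both histories, and let $c_i = \lwrite{r_k(x,v)}{H1}$. I would then prove $c_i = \lwrite{r_k(x,v)}{H2}$ in two steps. First, that $T_i$ is admissible as a lastWrite in $H2$: we have $x \in Wset(T_i)$ for free, and $c_i <_{H1} r_k(x,v)$ together with $x \in Wset(T_i)$ is precisely a w-r conflict, so $(c_i, r_k(x,v)) \in \prec_{H1}^{CO} = \prec_{H2}^{CO}$, which (reading the w-r clause in $H2$) yields $c_i <_{H2} r_k(x,v)$. Second, that $T_i$ is the \emph{closest} such writer in $H2$: if some $T_l \neq T_i$ had $x \in Wset(T_l)$, $c_l <_{H2} r_k(x,v)$ and $c_i <_{H2} c_l$, then the w-r clause gives $(c_l, r_k(x,v)) \in \prec_{H2}^{CO}$ and the w-w clause (using $x \in Wset(T_i) \cap Wset(T_l)$) gives $(c_i, c_l) \in \prec_{H2}^{CO}$; transporting both back through $\prec_{H2}^{CO} = \prec_{H1}^{CO}$ produces $c_l <_{H1} r_k(x,v)$ with $x \in Wset(T_l)$ and $c_i <_{H1} c_l$, contradicting that $c_i$ is the latest writer of $x$ before $r_k(x,v)$ in $H1$. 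Hence $c_i$ is exactly $r_k(x,v)$'s lastWrite in $H2$ as well.

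To finish, since $H1$ is \legal{} the read $r_k(x,v)$ is \legal{} in $H1$, i.e. $w_i(x,v) \in \evts{T_i}$ for its lastWrite transaction $T_i$; this is an event-membership fact, hence holds verbatim in $H2$, and because $c_i$ is also $r_k(x,v)$'s lastWrite in $H2$, the read is \legal{} in $H2$. As $r_k(x,v)$ was arbitrary, every successful read of $H2$ is \legal{}, so $H2$ is \legal{}, and symmetry gives the converse.

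I would expect the main obstacle to be the second step, namely transferring the \emph{maximality} of the lastWrite (its being the closest preceding writer) rather than mere existence of a preceding writer. This is where one must genuinely combine two different conflict clauses, using w-w to fix the relative commit order of two writers of $x$ and w-r to fix the position of the read among them, and rely on the fact that in a sequential history the order between any two conflicting operations is total, so that equality of the conflict orders pins the order down in both directions. A secondary point to be careful about is to read the conflict order at the level of operations, as in \lemref{co-eq}, so that a w-r conflict and an r-w conflict between the same pair of transactions are never confused.
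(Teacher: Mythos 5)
Your proposal is correct and takes essentially the same route as the paper's proof: both arguments transfer the w-r conflict through $\prec_{H1}^{CO} = \prec_{H2}^{CO}$ to place a writer of $x$ before the read in the other history, transfer the w-w conflict to fix the relative order of two committed writers of $x$, and then use the maximality of the \lastw{} to derive a contradiction, with the converse direction obtained by symmetry. Your packaging --- a direct proof that each successful read has the \emph{same} \lastw{} in $H1$ and $H2$, from which legality transfers as a pure event-membership fact --- is a mildly stronger and cleaner arrangement of the paper's contradiction argument, but the underlying mechanism is identical.
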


\begin{proof}
It is enough to prove the `if' case, and the `only if' case will follow from symmetry of the argument. Suppose that $H1$ is \legal{}. By contradiction, assume that $H2$ is not \legal, i.e., there is a read \op{} $r_j(x,v)$ (of transaction $T_j$) in $H2$ with its \lastw{} as $c_k$ (of transaction $T_k$) and $T_k$ writes $u \neq v$ to $x$, i.e. $w_k(x, u) \in \evts{T_k}$.  Let $r_j(x,v)$'s \lastw{} in $H1$ be $c_i$ of $T_i$. Since $H1$ is legal, $T_i$ writes $v$ to $x$, i.e. $w_i(x, v) \in \evts{T_i}$. 

Since $\evts{H1} = \evts{H2}$, we get that $c_i$ is also in $H2$, and
$c_k$ is also in $H1$. 
As $\prec_{H1}^{CO} = \prec_{H2}^{CO}$,
we get $c_i <_{H2} r_j(x, v)$ and $c_k <_{H1} r_j(x, v)$. 

Since $c_i$ is the \lastw{} of $r_j(x,v)$ in $H1$ we derive that 
$c_k <_{H1} c_i$ and, thus, $c_k <_{H2} c_i <_{H2} r_j(x, v)$.
But this contradicts the assumption that $c_k$ is the \lastw{} of
$r_j(x,v)$ in $H2$.
Hence, $H2$ is legal. \qed
\end{proof}

We now prove that if a history is \multv, then it is not in \coopty. 

\begin{lemma}
\label{lem:multi-co}
If a history $H$ is \multv{} then $H$ is not in \coopty. Formally, $\langle (H \text{is \multv}) \implies (H \notin \text{\coopty}) \rangle$. 
\end{lemma}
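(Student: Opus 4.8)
The plan is to argue by contradiction, chaining \Lemref{co-eq} and \Lemref{eqv-legal} so that co-opacity of a \valid{} history forces it to be \legal, which directly clashes with the defining property of a \multv{} history (\valid{} but \textbf{not} \legal).

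First I would assume $H$ is \multv{} yet $H \in \coopty$. By \Defref{coop1} there is then a \tseq{} \legal{} history $S$ equivalent to $\overline{H}$ that respects $\prec_H^{RT}$ and $\prec_H^{CO}$; in particular $\prec_H^{CO} \subseteq \prec_S^{CO}$. Next I would record a small bookkeeping fact: forming $\overline{H}$ only inserts abort events $a_k$, and since the conflict order is defined purely on commit events and successful reads, no conflict edge is created or destroyed by these insertions. Hence $\prec_H^{CO} = \prec_{\overline{H}}^{CO}$, giving $\prec_{\overline{H}}^{CO} \subseteq \prec_S^{CO}$. Now \Lemref{co-eq}, applied with $\overline{H}$ in the role of $H1$ and $S$ in the role of $H2$ (they are equivalent and the containment holds), upgrades this to the equality $\prec_{\overline{H}}^{CO} = \prec_S^{CO}$.

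With equal conflict orders and equivalence in hand, \Lemref{eqv-legal} yields that $\overline{H}$ is \legal{} iff $S$ is \legal. Since $S$ is \legal{} by construction, $\overline{H}$ is \legal. Finally, because the inserted aborts do not alter any read's \lastw{} (which is always a commit preceding the read), legality of $\overline{H}$ transfers back to $H$, so $H$ is \legal. This contradicts $H$ being \multv, and therefore no such $S$ can exist, i.e. $H \notin \coopty$.

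The argument is mostly an orchestration of the two earlier lemmas, so the one place that needs genuine care — the hard part — is justifying that the completion $\overline{H}$ preserves both the conflict order and the legality of $H$. Once those two invariances under the insertion of abort events are granted, the desired implication drops straight out of \Lemref{co-eq} and \Lemref{eqv-legal}; I would make sure to state both invariances explicitly rather than treat them as obvious, since they are exactly what lets me pivot between $H$ and $\overline{H}$.
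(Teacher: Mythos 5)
Your proof is correct and takes essentially the same route as the paper's: assume $H$ is \multv{} yet in \coopty, use \lemref{co-eq} to upgrade the containment $\prec_{H}^{CO} \subseteq \prec_{S}^{CO}$ to an equality, and then use \lemref{eqv-legal} to derive a contradiction with the legality of $S$. The only difference is that you explicitly verify the two completion invariances ($\prec_{H}^{CO} = \prec_{\overline{H}}^{CO}$, and $\overline{H}$ \legal{} iff $H$ \legal), which the paper silently elides by applying the two lemmas to $H$ and $S$ directly even though $S$ is equivalent to $\overline{H}$; your version is slightly more careful but not a different argument.
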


\begin{proof}
We prove this using contradiction. Assume that $H$ is \multv{}, i.e. $H$ is \valid{} but not \legal. But suppose that $H$ is in \coopty. From the definition of \coopty, we get that there exists a sequential and legal history $S$ such that $\prec_{H}^{CO} \subseteq \prec_{S}^{CO}$. From \lemref{co-eq}, we get that $\prec_{H}^{CO} = \prec_{S}^{CO}$. Combining this with \lemref{eqv-legal} and the assumption that $H$ is not legal, we get that $S$ is not legal. But this contradicts out assumption that $S$ legal. Hence, $H$ is not in \coopty. \qed
\end{proof}

Having shown the shortcoming of conflicts, we now define a new conflict notion in the next sub-section that will accommodate \multv{} histories as well.

\subsection{Multi-Version Conflict Definition}
\label{subsec:mvc-defn}

We define a few notations to describe the conflict notion. Consider a history $H$. For a read $r_i(x,v)$ in $H$, we define $r_i$'s \emph{\vwrite}, formally $\valw{r_i}{H}$, as the commit \op{} $c_j$ belonging to the transaction $T_j$ that occurs before $r_x$ in $H$ and writes $v$ to $x$. If there are multiple such committed transactions that write $v$ to $x$ then the \vwrite{} is the commit \op{} closest to $r_x$. 

\begin{definition}
\label{defn:mvco}
For a history $H$, we define \textit{\mvc} order, denoted as $\prec^{\mvco}_{H}$, between \op{s} of $\overline{H}$ as follows: (a) commit-commit (c-c) order: $c_i \prec^{\mvco}_{H} c_j$ if $c_i <_H c_j$ for two committed transaction $T_i$, $T_j$ and both of them write to $x$. (b) commit-read (c-r) order: Let $r_i(x, v)$ be a read \op{} in $H$ with its \vwrite{} as $c_j$ (belonging to the committed transaction $T_j$). Then for any committed transaction $T_k$ that writes to $x$ and either commits before $T_j$ or is same as $T_j$, formally $(c_k <_H c_j) \lor (c_k = c_j)$ , we define $c_k \prec^{\mvco}_{H} r_i$. (c) read-commit (r-c) order: Let $r_i(x, v)$ be a read \op{} in $H$ with its \lastw{} as $c_j$ (belonging to the committed transaction $T_j$). Then for any committed transaction $T_k$ that writes to $x$ and commits after $T_j$, i.e. $c_j <_H c_k$, we define $r_i \prec^{\mvco}_{H} c_k$.
\end{definition}

Observe that the \mvc{} order is defined on the \op{s} of $\overline{H}$ and not $H$. The set of conflicts in $H1$ are: $[\text{c-r}: (c_0, r_1(x,0)), (c_0, r_1(y,0))], [\text{r-c}: (r_1(x,0), c_2), (r_1(y,0), c_1)], [\text{c-c}: (c_0, c_2)]$. 

We say that a history $H'$ \textit{satisfies} the \mvc{} order of a history $H$, $\prec_H^{\mvco}$ if: (1) the events of $H'$ are same as $\overline{H}$, i.e. $H'$ is equivalent to $\overline{H}$. (2) For any two \op{s} $op_i$ and $op_j$, $op_i \prec^{\mvco}_{H} op_j$ implies $op_i <_{H'} op_j$. We denote this by $H' \vdash \prec_H^{\mvco}$. Otherwise, we say that $H'$ does not satisfy $\prec^{\mvco}_{H}$ and denote it as $H' \nvdash \prec_H^{\mvco}$.

Note that for any history $H$ that is \multv{}, $H$ does not satisfy its own \mvc{} order $\prec^{\mvco}_{H}$. For instance the \multv{} order in history $H1$ consists of the pair: $(r_1(y, 0), c_2)$. But $c_2$ occurs before $r_1(y, 0)$ in $H1$. We formally prove this property using the following lemmas.

\begin{lemma}
\label{lem:satisfy-valid}
Consider a (possibly \multv) \valid{} history $H$. Let $H'$ be a history which satisfies $\prec^{\mvco}_{H}$. Then $H'$ is \valid{} and $\prec^{\mvco}_{H'} = \prec^{\mvco}_{H}$. Formally, $\langle (H \text{ is \valid}) \land (H' \vdash \prec^{\mvco}_{H}) \implies (H' \text{ is \valid}) \land (\prec^{\mvco}_{H'} = \prec^{\mvco}_{H}) \rangle $.
\end{lemma}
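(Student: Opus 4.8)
The plan is to split the statement into its two conjuncts --- that $H'$ is \valid, and that $\prec^{\mvco}_{H'} = \prec^{\mvco}_{H}$ --- and to reduce the second to the preservation of three order-dependent quantities. Unlike the \co{} order, for which \lemref{co-eq} upgrades a mere inclusion to equality, the \mvc{} clauses are defined through each read's \vwrite{} and \lastw, which themselves depend on the ambient total order; so equality cannot be read off from an inclusion and must instead be obtained by showing that, in passing from $H$ to $H'$, (i) the commit order of any two transactions writing a common object, (ii) the \vwrite{} of every successful read, and (iii) the \lastw{} of every successful read are all preserved. Granting (i)--(iii), the c-c, c-r and r-c clauses generate identical pairs in $H'$ and $H$, and the orders coincide.

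I would first dispatch \validity{} together with (ii). A successful read $r_i(x,v)$ of $H'$ is, since $H'$ is equivalent to $\overline{H}$, also a successful read of $H$; as $H$ is \valid{} it has a \vwrite{} $c_j=\valw{r_i}{H}$ with $w_j(x,v)\in\evts{T_j}$ and $c_j<_H r_i$. The c-r clause applied with $T_k=T_j$ gives $c_j\prec^{\mvco}_{H}r_i$, so $H'\vdash\prec^{\mvco}_{H}$ forces $c_j<_{H'}r_i$; hence $r_i$ is \valid{} in $H'$, and $H'$ is \valid. Because all written values are distinct in the model, $T_j$ is the unique writer of $v$ to $x$, so $c_j$ is necessarily $\valw{r_i}{H'}$ as well, giving (ii).

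For (i), if $T_k$ and $T_m$ both write $x$ with $c_k<_H c_m$, the c-c clause yields $c_k\prec^{\mvco}_{H}c_m$ and hence $c_k<_{H'}c_m$; as this holds for every such pair and commits are totally ordered in both histories, the committed writers of each object appear in the same order in $H'$ as in $H$. Combined with (ii), this already forces the c-c and c-r components of the two \mvc{} orders to agree.

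The remaining step, and the one I expect to be the obstacle, is (iii): whether $\lwrite{r_i}{H}$ survives the move to $H'$, since the r-c clause is built on it. Write $c_{lw}=\lwrite{r_i}{H}$. The r-c clause forces every writer of $x$ committing after $c_{lw}$ to follow $r_i$ in $H'$, which forbids any co-writer of $x$ from lying strictly between $c_{lw}$ and $r_i$ in $H'$; thus the \lastw{} in $H'$ cannot drift \emph{later} than $c_{lw}$. The dangerous direction is the other one: I must show $c_{lw}<_{H'}r_i$. For a \legal{} read this is free, since then $c_{lw}=\valw{r_i}{H}$ and the c-r clause has already placed it before $r_i$ in $H'$. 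For a \multv{} read, however, $\valw{r_i}{H}<_H c_{lw}<_H r_i$, and neither clause relates $c_{lw}$ --- nor any writer of $x$ committing strictly between the \vwrite{} and the \lastw{} --- to $r_i$: the c-r clause governs only writers up to the \vwrite, and the r-c clause only writers strictly past the \lastw. Nothing in $\prec^{\mvco}_{H}$ visibly prevents such an intermediate writer (or $c_{lw}$ itself) from sliding after $r_i$ in $H'$, which would shorten the \lastw{} and manufacture spurious r-c pairs. Closing this gap is precisely where I would concentrate: I would search for an extra constraint forced by \validity{} or by the combination of the c-c and r-c clauses that pins $c_{lw}$ before $r_i$; failing that, I would suspect the statement needs a stronger hypothesis (for instance restricting to reads whose \vwrite{} and \lastw{} coincide, or requiring $H'$ to be \tseq) and would examine the authors' argument at this point with particular care.
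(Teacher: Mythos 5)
Your decomposition and its positive steps are sound: the \validity{} argument via the c-r clause instantiated at $T_k = T_j$ is exactly the paper's one substantive step, and your (i)--(ii) are correct. Moreover, the obstacle you isolate at (iii) is genuine --- in fact, if clause (c) of Definition~\ref{defn:mvco} is read literally with \lastw, the lemma is \emph{false}. Take $H = w_1(x,1)\, c_1\, w_2(x,2)\, c_2\, r_3(x,1)\, c_3$, so that $c_1 = \valw{r_3}{H} \neq \lwrite{r_3}{H} = c_2$. Then $\prec^{\mvco}_{H}$ leaves $c_2$ and $r_3$ unordered (c-r stops at the \vwrite{} $c_1$; r-c starts strictly after the \lastw{} $c_2$), so the history $H' = w_1(x,1)\, c_1\, r_3(x,1)\, c_3\, w_2(x,2)\, c_2$ satisfies $\prec^{\mvco}_{H}$ (the pairs $(c_0,c_1)$, $(c_0,c_2)$, $(c_1,c_2)$, $(c_0,r_3)$, $(c_1,r_3)$ all hold in $H'$). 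But $\lwrite{r_3}{H'} = c_1$, so the literal clause (c) puts $(r_3, c_2) \in \prec^{\mvco}_{H'} \setminus \prec^{\mvco}_{H}$, and the claimed equality fails --- precisely along the fault line you traced, with the intermediate writer sliding past the read and shortening the \lastw.

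The resolution is that clause (c) is evidently \emph{intended} to be stated with \vwrite{} rather than \lastw{} --- an inconsistency in the paper's definition. Two pieces of internal evidence: the paper lists $(r_1(y,0), c_2)$ as an \mvc{} pair of $H1$, which only the valWrite reading generates (the \lastw{} of $r_1(y,0)$ is $c_2$ itself, which generates nothing under the literal clause); and the proof of \lemref{mvco-legal} infers $r_i \prec^{\mvco}_{H'} c_l$ from $c_v <_{H'} c_l <_{H'} r_i$, a step licensed only by the valWrite reading. Under that reading your own machinery closes the gap you flagged: for every successful read, \emph{every} committed writer of $x$ is ordered against $r_i$ by c-r or r-c, so by your (i)--(ii) the position of $r_i$ relative to all writers of $x$ is pinned in $H'$; consequently $\lwrite{r_i}{H'} = \valw{r_i}{H'} = \valw{r_i}{H}$, the dangerous window between \vwrite{} and \lastw{} cannot arise in $H'$, and all three clauses generate identical pairs, giving $\prec^{\mvco}_{H'} = \prec^{\mvco}_{H}$. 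For comparison, the paper's own proof is a one-sentence assertion that ``it can be verified'' that the two \mvc{} orders coincide and never confronts the \lastw{} issue at all, so your analysis is strictly more careful than the published argument; commit to the valWrite reading of clause (c) and your proof is complete.
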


\begin{proof}
Here, we have that $H$ is \valid{} and $H'$ satisfies $\prec^{\mvco}_{H}$. Thus all the events of $H$ and $H'$ are the same. The definition of satisfaction says that the events of $H'$ are ordered according to \mvc{} order of $H$. Thus, it can be verified that all the \mvc{} orders of both histories are the same, i.e. $\prec^{\mvco}_{H'} = \prec^{\mvco}_{H}$. Since $H$ is \valid{} and $H'$ has the same c-r \mvc{} order as $H$, the \vwrite{} of all the read \op{s} in $H'$ occur before the corresponding reads in $H'$. Hence $H'$ is \valid{} as well. \qed
\end{proof}

\begin{lemma}
\label{lem:mvco-legal}
Consider a (possibly \multv) \valid{} history $H$. Let $H'$ be a \valid{} history (which could be same as $H$). If $H'$ satisfies $\prec^{\mvco}_{H}$ then $H'$ is \legal. Formally, $\langle (H' \text{ is \valid}) \land (H' \vdash \prec^{\mvco}_{H}) \implies (H' \text{ is \legal}) \rangle $.
\end{lemma}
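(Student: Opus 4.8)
The goal is to show that every successful read $r_i(x,v)$ in $H'$ is \legal, i.e.\ that the transaction supplying $\lwrite{r_i}{H'}$ actually writes $v$ to $x$. Rather than reconstruct the order of $H'$ from scratch, my plan is to \emph{pin down} $\lwrite{r_i}{H'}$ by locating every committed $x$-writer relative to the single commit $c_j = \valw{r_i}{H}$. Since $H$ is \valid, $c_j$ exists and $w_j(x,v)\in\evts{T_j}$; and since $H'$ has the same events as $\overline{H}$, the read $r_i$ still returns $v$ in $H'$. It therefore suffices to prove that $c_j = \lwrite{r_i}{H'}$.

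First I would record the two immediate consequences of the hypothesis $H' \vdash \prec^{\mvco}_{H}$ that make $c_j$ the right candidate. Taking $c_k = c_j$ in the c-r clause of Definition~\ref{defn:mvco} gives $c_j \prec^{\mvco}_{H} r_i$, hence $c_j <_{H'} r_i$. Now let $T_k \ne T_j$ be any committed transaction with $x \in \Wset(T_k)$; as $<_H$ is total, either $c_k <_H c_j$ or $c_j <_H c_k$. In the first case the c-c clause (both write $x$) gives $c_k \prec^{\mvco}_{H} c_j$, hence $c_k <_{H'} c_j$; in the second case the r-c clause gives $r_i \prec^{\mvco}_{H} c_k$, hence $r_i <_{H'} c_k$. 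So in $H'$ every $x$-writer other than $T_j$ either precedes $c_j$ or follows $r_i$; combined with $c_j <_{H'} r_i$ this makes $c_j$ the latest commit writing $x$ before $r_i$ in $H'$, i.e.\ $c_j = \lwrite{r_i}{H'}$, and legality follows.

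The step that carries the weight is the case $c_j <_H c_k$, and the crucial (and slightly delicate) point is that the r-c clause must be anchored at the \vwrite{} $c_j$ of $r_i$, not at $r_i$'s \lastw{} in $H$ --- exactly as in the worked example, where $\prec^{\mvco}_{H1}$ contains the pair $(r_1(y,0), c_2)$. This anchoring is what forbids an $x$-writer committing after $c_j$ in $H$ from slipping between $c_j$ and $r_i$ in $H'$. The subtlety is that when $H$ is itself \multv, $r_i$'s \lastw{} in $H$ is a \emph{later} commit than $c_j$, so one cannot lean on $H$'s own (non-existent) legality; the entire content of the lemma is that satisfying $\prec^{\mvco}_{H}$ re-imposes $c_j$ as the \lastw{} in every such $H'$. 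I would finally note that the argument uses only that $<_{H'}$ is a total order on the events, so no t-sequentiality or real-time assumption on $H'$ is required.
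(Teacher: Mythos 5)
Your proof is correct, and it takes a genuinely different route from the paper's. The paper argues by contradiction: assuming a non-\legal{} read $r_i(x,v)$ in $H'$ with $c_v = \valw{r_i}{H'} \neq \lwrite{r_i}{H'} = c_l$, it uses the \validity{} of $H'$ to obtain $c_v <_{H'} c_l <_{H'} r_i$, invokes \lemref{satisfy-valid} to conclude $\prec^{\mvco}_{H} = \prec^{\mvco}_{H'}$ (so that $H'$ satisfies its own \mvc{} order), and then applies the r-c clause \emph{inside $H'$} to force $r_i <_{H'} c_l$, a contradiction. You instead argue directly and work entirely with $\prec^{\mvco}_{H}$: by placing every committed $x$-writer $c_k$ relative to $c_j = \valw{r_i}{H}$ through the c-c and r-c clauses (using totality of $<_H$), you pin down $\lwrite{r_i}{H'} = c_j$ outright. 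This buys three things: the argument is self-contained (no appeal to \lemref{satisfy-valid}); you never use the hypothesis that $H'$ is \valid, so you actually prove a slightly stronger statement in which $H' \vdash \prec^{\mvco}_{H}$ alone forces \legality{} (and hence \validity) of $H'$; and you extract the sharper structural fact that in any satisfying $H'$ each read's \lastw{} coincides with its \vwrite{} in $H$, which is the real content of the conflict notion. Your remark about anchoring is also well taken, and it is not a deviation on your part: clause (c) of Definition~\ref{defn:mvco} as literally written anchors the r-c order at the read's \lastw{}, and under that literal reading the lemma is false --- $H1$ would then satisfy its own \mvc{} order while being \valid{} but not \legal, contradicting \corref{mltv-mvc} as well. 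Both the worked example's conflict $(r_1(y,0), c_2)$ and the paper's own proof step deriving $r_i \prec^{\mvco}_{H'} c_l$ presuppose the \vwrite{} anchoring that you adopt, so you read the definition the way it was intended.
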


\begin{proof}
Assume that $H'$ is not \legal. Hence there exists a read \op{}, say $r_i(x, v)$, in $\evts{H'}$ that is not \legal. This implies that \lastw{} of $r_i$ is not the same as its \vwrite. Let $c_l = \lwrite{r_i}{H'} \neq \valw{r_i}{H'} = c_v$. Let $w_l(x,u) \in \evts{T_l}$ and $w_v(x,v) \in \evts{T_v}$. Since $H'$ is \valid{} and $c_l$ is the \lastw{} of $r_i$, we get the following order of the events: 

\begin{equation}
\label{eq:v-l-order}
c_v <_{H'} c_l <_{H'} r_i 
\end{equation}

Since $H' \vdash \prec^{\mvco}_{H}$ and $H$ is \valid, from \lemref{satisfy-valid} we get that $\prec^{\mvco}_{H} = \prec^{\mvco}_{H'}$. Hence, $H'$ must satisfy its own \mvc{} order, $\prec^{\mvco}_{H'}$. Now combining \eqnref{v-l-order} with the definition of r-w \mvc{} order, we get that $r_i \prec^{\mvco}_{H'} (= \prec^{\mvco}_{H}) c_l$. But since $H' \vdash \prec^{\mvco}_{H'}$, we get that $r_i <_{H'} c_l$. But this contradicts \eqnref{v-l-order}. Hence, $r_i$ must be legal which in turn implies that $H'$ must be legal. \qed  
\end{proof}

\noindent Using this lemma, we get the following corollary, 

\begin{corollary}
\label{cor:mltv-mvc}
Consider a (possibly \multv) \valid{} history $H$. Let $H'$ be a \multv{} history (which could be same as $H$). Then, $H'$ does not satisfy $\prec^{\mvco}_{H}$. Formally, $\langle (H' \text{ is \multv}) \implies (H' \nvdash \prec^{\mvco}_{H}) \rangle $.
\end{corollary}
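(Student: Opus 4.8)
The plan is to recognize this statement as essentially the contrapositive of \lemref{mvco-legal}, and so to argue by contradiction. Recall that, by the paper's definition, a \multv{} history is one that is \valid{} but not \legal. So $H'$ being \multv{} gives me two facts for free: $H'$ is \valid, and $H'$ is not \legal. The first of these is exactly the validity hypothesis that \lemref{mvco-legal} requires of $H'$.

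First I would suppose, toward a contradiction, that $H'$ does satisfy the \mvc{} order of $H$, i.e. $H' \vdash \prec^{\mvco}_{H}$. I then check that the premises of \lemref{mvco-legal} are all in hand: $H$ is \valid{} (given in the statement of the corollary), $H'$ is \valid{} (from $H'$ being \multv), and $H' \vdash \prec^{\mvco}_{H}$ (the assumption just made). Applying the lemma yields that $H'$ is \legal. But this directly contradicts the other half of ``$H'$ is \multv'', namely that $H'$ is not \legal. Hence the assumption $H' \vdash \prec^{\mvco}_{H}$ is untenable, and I conclude $H' \nvdash \prec^{\mvco}_{H}$, as required.

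I do not expect any real obstacle here: the corollary is an immediate specialization of \lemref{mvco-legal}, whose own proof (leaning on \lemref{satisfy-valid} and the r-c \mvc{} order) does the heavy lifting. The only point that needs a moment's care is confirming that the validity precondition of the lemma is met; this is automatic, since \multv{} is defined as \valid{}-but-not-\legal, so validity comes bundled with the hypothesis rather than being something extra to establish. The subtle reading worth keeping in mind is that $H'$ is allowed to equal $H$, so the corollary in particular asserts that a \multv{} history never satisfies its own \mvc{} order --- matching the informal remark made earlier that in $H1$ the pair $(r_1(y,0), c_2)$ is in $\prec^{\mvco}_{H1}$ while $c_2$ occurs before $r_1(y,0)$ in $H1$.
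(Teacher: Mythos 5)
Your proof is correct and matches the paper's own argument, which likewise derives the corollary as the contrapositive of \lemref{mvco-legal} using that a \multv{} history is \valid{} but not \legal. Phrasing it as a contradiction rather than directly invoking the contrapositive is only a cosmetic difference.
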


\begin{proof}
We are given that $H'$ is \multv. This implies that $H'$ is not \legal. From the contrapositive of \lemref{mvco-legal}, we get that $H' \nvdash \prec^{\mvco}_{H}$. \qed
\end{proof}

\section{Multi-Version Conflict Opacity}
\label{sec:mvco}

We now illustrate the usefulness of the conflict notion by defining another subset of opacity \emph{\mvopty} which is larger than \coopty. We formally define it as follows (along the same lines as \coopty): 

\begin{definition}
\label{def:mvcop}
A history $H$ is said to be \emph{multi-version conflict opaque} or \emph{\mvop} if $H$ is \valid{} and there exists a t-sequential history $S$ such that (1) $S$ is equivalent to $\overline{H}$ and (2) $S$ respects $\prec_{H}^{RT}$ and $S$ satisfies $\prec_{H}^{\mvco}$. 
\end{definition}

It can be seen that the history $H1$ is \mvop, with the \mvc{} equivalent \tseq{} history being: $T1 T2$.  Please note that we don't restrict $S$ to be \legal{} in the definition. But it turns out that if $H$ is \mvop{} then $S$ is automatically \legal{} as shown in the following lemma.

\begin{theorem}
\label{thm:mvcop-op}
If a history $H$ is \mvop, then it is also \opq. 
\end{theorem}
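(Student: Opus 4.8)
The plan is to exploit the fact that the definitions of \mvop{} and \opq{} differ in only one respect. Opacity asks for a \tseq{} \emph{\legal} witness $S$ that is equivalent to $\overline{H}$ and respects $\prec_H^{RT}$, whereas \mvopty{} asks instead that the \tseq{} witness $S$ satisfy $\prec_H^{\mvco}$ (with no explicit legality requirement). So, starting from the \mvop{} witness $S$, every opacity condition except legality of $S$ holds verbatim, and the whole argument reduces to showing that this particular $S$ is \legal.

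First I would unpack the hypothesis: since $H$ is \mvop, the history $H$ is \valid{} and there is a \tseq{} history $S$ with (i) $S$ equivalent to $\overline{H}$, (ii) $\prec_H^{RT} \subseteq \prec_S^{RT}$, and (iii) $S \vdash \prec_H^{\mvco}$. Next I would apply \lemref{satisfy-valid} to the pair $(H,S)$: because $H$ is \valid{} and $S$ satisfies $\prec_H^{\mvco}$, the lemma yields that $S$ is itself \valid{} (and in fact $\prec_S^{\mvco} = \prec_H^{\mvco}$). With validity of $S$ established, I would then invoke \lemref{mvco-legal} on $S$: since $S$ is now \valid{} and $S \vdash \prec_H^{\mvco}$, that lemma gives that $S$ is \legal.

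Finally, I would assemble the pieces. The history $S$ is a \tseq{} \legal{} history that is equivalent to $\overline{H}$ and respects $\prec_H^{RT}$, and $H$ is \valid. These are exactly the requirements of \Defref{} of opacity, so the very same $S$ serves as an opacity witness, and hence $H$ is \opq.

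I do not expect a genuine obstacle here; the proof is a two-step chaining of the auxiliary lemmas that together bridge precisely the gap between the two definitions. The only point demanding care is verifying that the remaining opacity requirements transfer unchanged — namely \tseq{}ality of $S$, equivalence of $S$ to $\overline{H}$, and the real-time-order inclusion $\prec_H^{RT} \subseteq \prec_S^{RT}$. Each of these holds trivially because the witness $S$ is never modified; we only enrich it with the additional property of legality.
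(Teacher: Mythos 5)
Your proof is correct and takes essentially the same route as the paper: reduce opacity to showing that the \mvop{} witness $S$ is \legal, and obtain legality from \lemref{mvco-legal}. In fact you are slightly more careful than the paper's own proof, which invokes \lemref{mvco-legal} directly without first establishing that $S$ is \valid{} (a hypothesis of that lemma) --- a gap you close explicitly by first applying \lemref{satisfy-valid}.
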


\begin{proof}
Since $H$ is \mvop, it follows that there exists a \tseq{} history $S$ such that (1) $S$ is equivalent to $\overline{H}$ and (2) $S$ respects $\prec_{H}^{RT}$ and $S$ satisfies $\prec_{H}^{\mvco}$. In order to prove that $H$ is \opq, it is sufficient to show that $S$ is \legal. Since $S$ satisfies $\prec_{H}^{\mvco}$, from \lemref{mvco-legal} we get that $S$ is \legal. Hence, $H$ is \opq{} as well. \qed 
\end{proof}

Thus, this lemma shows that \mvopty{} is a subset of \opty. Actually, \mvopty{} is a strict subset of \opty. Consider the history $H2 = r_1(x,0) r_2(z,0) r_3(z,0) w_1(x, 5) c_1 r_2(x, 5) w_2(x, 10) w_2(y, 15) c_2 \\ 
r_3(x, 5) w_3(y, 25) c_3$. The set of \mvc{s} in $H2$ are (ignoring the conflicts with $C_0$): $[\text{c-r}: (c_1, r_2(x,5)), (c_1, r_3(x,5))], [\text{r-c}: (r_3(x,5), c_2)], [\text{c-c}: (c_1, c_2), (c_2, c_3)]$. It can be verified that $H2$ is \opq{} with the equivalent \tseq{} history being $T_1 T_3 T_2$. But there is no \mvc{} equivalent \tseq{} history. This is because of the conflicts: $(r_3(x,5), c_2), (c_2, c_3)$. Hence, $H2$ is not \mvop.

Next, we will relate the classes \coopty{} and \mvopty. In the following lemma, we show that \coopty{} is a subset of \mvopty. 

\begin{theorem}
\label{thm:cop-mvcop}
If a history $H$ is \coop, then it is also \mvop. 
\end{theorem}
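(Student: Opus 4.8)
The plan is to show that the very same t-sequential history $S$ that witnesses $H \in \coopty$ also witnesses $H \in \mvopty$. Since $H$ is \coop, there is a \tseq{} \legal{} history $S$ that is equivalent to $\overline{H}$ and respects both $\prec_{H}^{RT}$ and $\prec_{H}^{CO}$. To conclude that $H$ is \mvop{} with this same $S$, it suffices to verify that $S$ satisfies $\prec_{H}^{\mvco}$: equivalence to $\overline{H}$ and respect for $\prec_{H}^{RT}$ are already given, so the only new obligation is $S \vdash \prec_{H}^{\mvco}$. Condition (1) of satisfaction (same events as $\overline{H}$) is immediate from equivalence, so the whole argument reduces to showing that every pair $op_i \prec_{H}^{\mvco} op_j$ also satisfies $op_i <_S op_j$. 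I would do this by a case analysis over the three kinds of \mvc{} edges, in each case exhibiting a corresponding \co{} edge of $H$ that $S$ is already known to respect.

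First I would note that, because values are assumed unique and $S$ is equivalent to $\overline{H}$, each read $r_i(x,v)$ has the same \vwrite{} $c_j$ in $S$ as in $H$ (it is the unique committed writer of value $v$ to $x$), and since $H$ is \valid{} the w-r \co{} edge $c_j <_H r_i$ forces $c_j <_S r_i$. For a c-c edge $c_i \prec_{H}^{\mvco} c_j$ we have $c_i <_H c_j$ with $x \in \Wset(T_i) \cap \Wset(T_j)$, which is exactly a w-w conflict, so $T_i \prec_{H}^{CO} T_j$ and hence $c_i <_S c_j$. For a c-r edge $c_k \prec_{H}^{\mvco} r_i$ with \vwrite{} $c_j$, either $c_k = c_j$ (handled by the w-r edge above) or $c_k <_H c_j$ with $T_k, T_j$ both writing $x$ (a w-w conflict), giving $c_k <_S c_j <_S r_i$; either way $c_k <_S r_i$.

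The step I expect to be the main obstacle is the r-c edge, because it is defined through the \lastw{} of $H$, which need not coincide with the \lastw{} of $S$. Here $r_i \prec_{H}^{\mvco} c_k$ arises from $c_j = \lwrite{r_i}{H}$ together with a writer $T_k$ of $x$ satisfying $c_j <_H c_k$. The key observation is that $r_i <_H c_k$ must hold: otherwise $c_k$ would be a committed write to $x$ with $c_j <_H c_k <_H r_i$, contradicting that $c_j$ is the \emph{latest} committed write to $x$ preceding $r_i$ in $H$. Given $r_i <_H c_k$ with $x \in \Wset(T_k)$, this is an r-w conflict, so $T_i \prec_{H}^{CO} T_k$ and therefore $r_i <_S c_k$, as required.

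Having checked all three edge types, $S$ respects every pair of $\prec_{H}^{\mvco}$ and hence $S \vdash \prec_{H}^{\mvco}$. Together with the facts that $S$ is \tseq, equivalent to $\overline{H}$, and respects $\prec_{H}^{RT}$ (all inherited from \coopty), this exhibits $S$ as a witness for $H \in \mvopty$. I would not need the \legality{} of $S$ directly; the argument rests entirely on $S$ respecting $\prec_{H}^{CO}$ and on the definitions of \vwrite{} and \lastw{}.
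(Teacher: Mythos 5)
Your proof is correct, and while it uses the same witness $S$ and the same three-case skeleton as the paper's proof, it runs the case analysis in the opposite --- and, strictly speaking, the required --- direction. The paper argues that each conflict of $H$ induces a corresponding \mvc{} edge and, invoking the legality of $S$, asserts that the conflict and \mvc{} orders coincide; but what $S \vdash \prec_H^{\mvco}$ actually demands is that every \mvc{} edge of $H$ be respected by $S$, which is the implication you prove: each c-c edge is literally a w-w conflict; each c-r edge $c_k \prec_H^{\mvco} r_i$ decomposes (via validity of $H$, which places the \vwrite{} $c_j$ before $r_i$ in $H$, yielding a w-r conflict) into $c_k <_H c_j <_H r_i$ with the first step a w-w conflict; and --- the step the paper glosses over --- each r-c edge $r_i \prec_H^{\mvco} c_k$ yields a genuine r-w conflict of $H$, because the maximality of $\lwrite{r_i}{H}$ forces $r_i <_H c_k$, a point absent from the paper, whose r-c case instead asserts $r_i <_H c_k$ as a hypothesis and derives the \mvc{} edge from it. Two consequences of your route are worth noting. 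First, you never use the legality of $S$: your argument shows that \emph{any} \tseq{} history equivalent to $\overline{H}$ that respects $\prec_H^{CO}$ automatically satisfies $\prec_H^{\mvco}$, provided only that $H$ is \valid{}, whereas the paper leans on legality of $S$ in both the c-r and r-c cases. Second, your version is airtight where the paper's write-up is loose: the paper concludes ``conflict among the \op{s} in $S$ also results in \mvc{} among these \op{s}, hence $S$ satisfies the \mvc{} order,'' which on its face establishes the converse of what is needed, and it also conflates $c_j$ and $c_k$ in two of its cases. One small trim: your opening observation that each read has the same \vwrite{} in $S$ as in $H$ is never used --- the order $\prec_H^{\mvco}$ is defined entirely from $H$, and your conflict-chain argument never consults the reads of $S$ at all. (Both you and the paper implicitly use that respecting the transaction-level order $\prec_H^{CO}$ in a \tseq{} $S$ yields the needed operation-level orderings such as $c_j <_S r_i$; this is immediate from t-sequentiality but could be stated in one line.)
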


\begin{proof}
Since $H$ is \coop, we get that there exists an equivalent \legal{} \tseq{} history $S$ that respects the real-time and conflict orders of $H$. Thus if we show that $S$ satisfies \mvc{} order of $H$, it then implies that $H$ is also \mvop. Since $S$ is \legal, it turns out that the conflicts and \mvc{s} are the same. To show this, let us analyse each conflict order:
\begin{itemize}

\item c-c order: If two \op{s} are in c-c conflict, then by definition they are also ordered by the c-c \mvc. 

\item c-r order: Consider the two \op{s}, say $c_j$ and $r_i$ that are in conflict (due to a \tobj{} $x$). Hence, we have that $c_k <_H r_i$. Let $c_j = \valw{r_i}{H}$. Since, $S$ is \legal, either $c_k = c_j$ or $c_k <_H c_j$. In either case, we get that $c_k \prec_H^{\mvco} r_i$.

\item r-c order: Consider the two \op{s}, say $c_j$ and $r_i$ that are in conflict (due to a \tobj{} $x$). Hence, we have that $r_i <_H c_k$. Let $c_j = \valw{r_i}{H}$. Since, $S$ is \legal, $c_j <_H r_i <_H c_k$. Thus in this case also we get that $r_i  \prec_H^{\mvco} c_k$.

\end{itemize}
Thus in all the three cases, conflict among the \op{s} in $S$ also reults in \mvc{} among these \op{s}. Hence, $S$ satisfies the \mvc{} order. \qed
\end{proof}

This lemma shows that \coopty{} is a subset of \mvopty. The history $H1$ is \mvop{} but not in \coop. Hence, \coopty{} is a strict subset of \mvopty. \figref{ex2} shows the relation between the various classes. 

\begin{figure}[tbph]
\centerline{\scalebox{0.5}{\input{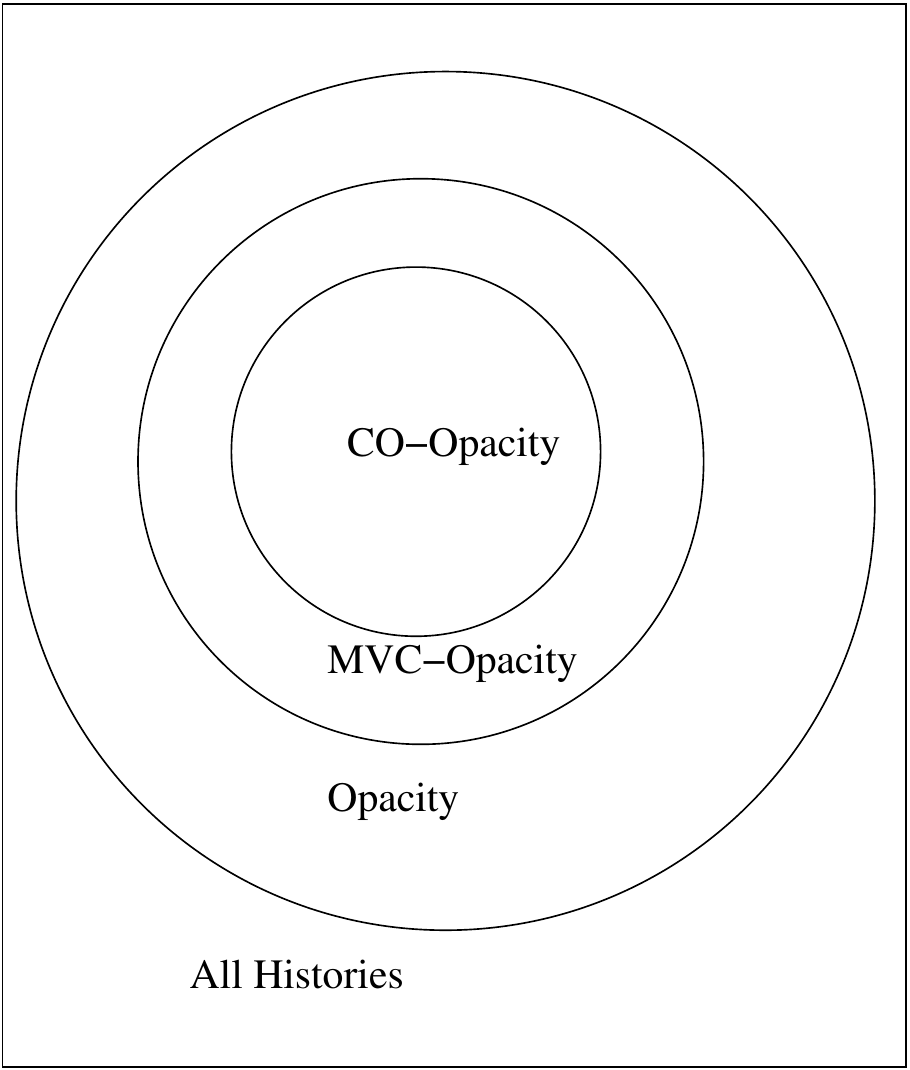_t}}}
\caption{Relation between the various classes}
\label{fig:ex2}
\end{figure}

\subsection{Graph Characterization of MVC-Opacity}
\label{subsec:graph}

In this section, we will describe graph characterization of \mvopty. This characterization will enable us to verify its membership in polynomial time. 

Given a history $H$, we construct a \textit{multi-version conflict graph}, $\mvcg{H} = (V,E)$ as follows:  (1) $V=\txns(H)$, the set of transactions in $H$ (2) an edge $(T_i,T_j)$ is added to $E$ whenever 
\begin{itemize}
\item[2.1] real-time edges: If $T_i$ precedes $T_j$ in $H$
\item[2.2] \mvco{} edges: If $T_i$ contains an \op{} $p_i$ and $T_j$ contains $p_j$ such that $p_i \prec_{H}^{\mvco} p_j$.
\end{itemize}

\noindent Based on the \mvg{} construction, we have the following graph characterization for \mvopty. 

\begin{theorem}
\label{thm:graph}
A \valid{} history $H$ is \mvop{} iff $\mvcg{H}$ is acyclic. 
\end{theorem}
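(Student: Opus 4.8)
The plan is to prove both directions of the equivalence, treating this as the multi-version analogue of the classical theorem that a schedule is conflict-serializable iff its conflict graph is acyclic. The bridge between the two sides is the standard fact that a finite directed graph is acyclic iff it admits a topological order (a linear extension of its edge relation), together with the observation that any \tseq{} history induces a total order on $\txns(H)$ in which all events of one transaction precede all events of the next.

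For the forward direction, I would assume $H$ is \mvop{} and let $S$ be the witnessing \tseq{} history: $S$ is equivalent to $\overline{H}$, respects $\prec_{H}^{RT}$, and satisfies $\prec_{H}^{\mvco}$. Since $S$ is \tseq, it orders $\txns(H)$ into a total order $<_S^{T}$. I would then check that every edge of $\mvcg{H}$ points forward in $<_S^{T}$: a real-time edge $(T_i,T_j)$ is honored because $S$ respects $\prec_{H}^{RT}$, and an \mvco{} edge $(T_i,T_j)$, arising from operations with $p_i \prec_{H}^{\mvco} p_j$, is honored because satisfaction forces $p_i <_S p_j$, hence $T_i <_S^{T} T_j$. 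Thus the edge relation of $\mvcg{H}$ is contained in the strict total order $<_S^{T}$, which is irreflexive and transitive; a relation contained in a strict total order admits no cycle, so $\mvcg{H}$ is acyclic.

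For the backward direction, I would assume $\mvcg{H}$ is acyclic and fix any topological order $T_{\pi(1)},\ldots,T_{\pi(n)}$ of its vertices. I build $S$ as the \tseq{} history obtained by concatenating the events of these transactions in this order, preserving each transaction's internal order and using $\overline{H}$ so that incomplete transactions appear aborted. I would then verify the three clauses of \defref{mvcop}: $S$ is equivalent to $\overline{H}$ by construction; $S$ respects $\prec_{H}^{RT}$ because every real-time edge lies in $\mvcg{H}$ and is therefore honored by the topological order; and $S$ satisfies $\prec_{H}^{\mvco}$ because every pair with $p_i \prec_{H}^{\mvco} p_j$ induces an edge $(T_i,T_j)$, so $T_i$ precedes $T_j$ topologically, and since $S$ places whole transactions contiguously this gives $p_i <_S p_j$. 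As $H$ is \valid{} by hypothesis and such an $S$ exists, $H$ is \mvop. (By \lemref{mvco-legal} this $S$ is moreover \legal, which is consistent with \thmref{mvcop-op}.)

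The step I expect to be the main obstacle is making the operation-to-transaction translation airtight. I would need to confirm that no \mvco{} pair relates two operations of the \emph{same} transaction, which would create a spurious self-loop; this follows from well-formedness, since in each of the c-c, c-r, and r-c cases the two operations belong to committed transactions separated by a commit event and hence lie in distinct transactions. I would also argue that contracting the operation-level order down to the transaction-level edge relation loses no information for acyclicity — that a cycle among operations corresponds exactly to a cycle among transaction vertices and conversely — so that acyclicity of $\mvcg{H}$ is genuinely equivalent to the existence of a consistent \tseq{} ordering at the operation level. The remaining checks (equality of event sets, preservation of \validity) are routine and can lean on \lemref{satisfy-valid}.
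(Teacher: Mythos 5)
The paper states \thmref{graph} without any accompanying proof, so there is nothing to compare against line by line; your skeleton --- topological sort of $\mvcg{H}$ in one direction, embedding the edge relation into the total transaction order of the witnessing \tseq{} history $S$ in the other --- is the standard argument and almost certainly what the authors intend. However, the step you yourself flagged as the main obstacle is resolved incorrectly, and the error is not cosmetic. You claim that in each of the c-c, c-r, and r-c cases the two related \op{s} lie in distinct transactions. That is true for c-c and c-r, but false for r-c: \defref{mvco}(c) relates $r_i(x,v)$ to the commit of \emph{any} committed transaction writing $x$ that commits after $r_i$'s \lastw{} $c_j$, and nothing excludes $T_i$ itself. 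Concretely, take $H = r_1(x,0)\, w_1(x,5)\, c_1$ (with the initializing $T_0$). Here $\lwrite{r_1}{H} = c_0$, $T_1$ writes $x$, and $c_0 <_H c_1$, so literally $r_1 \prec_H^{\mvco} c_1$ --- an intra-transaction pair. Contracted to transactions, this puts the self-loop $(T_1,T_1)$ into $\mvcg{H}$, making the graph cyclic even though this serial, \legal{} history is plainly \mvop{} (the constraint $r_1 <_S c_1$ holds in every history containing both events). So with the graph construction read literally, your forward direction fails, and your stated justification for why it cannot fail is wrong precisely in the one case where it matters.

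The repair is short, but it must be made explicitly. Observe that an intra-transaction \mvco{} pair can only have the form $r_i \prec_H^{\mvco} c_i$: the c-c case needs two distinct commits, and in the c-r case $c_i \prec_H^{\mvco} r_i$ would force $c_i \leq_H \valw{r_i}{H} <_H r_i <_H c_i$, a contradiction. Since every well-formed history equivalent to $\overline{H}$ places $r_i$ before $T_i$'s own commit, such pairs are satisfied vacuously by \emph{any} candidate $S$ and therefore impose no constraint; accordingly, clause 2.2 of the graph construction must be read as requiring $T_i \neq T_j$ (equivalently, intra-transaction pairs are dropped before contracting). With that emendation, your forward direction works because the remaining edge relation embeds into the strict total order $<_S^{T}$, and your backward direction works because inter-transaction pairs are honored by the topological order while the intra-transaction pairs are honored by preserving each transaction's internal event order --- a point your construction already guarantees. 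The rest of your argument (equivalence with $\overline{H}$, real-time edges, the appeal to \lemref{satisfy-valid} and \lemref{mvco-legal} for \validity{} and \legality{} of $S$) is sound as written.
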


\noindent \figref{mvcgraphs} shows the \mvg{s} for the histories $H1$ and $H2$. 

\begin{figure}[tbph]
\centerline{\scalebox{0.5}{\input{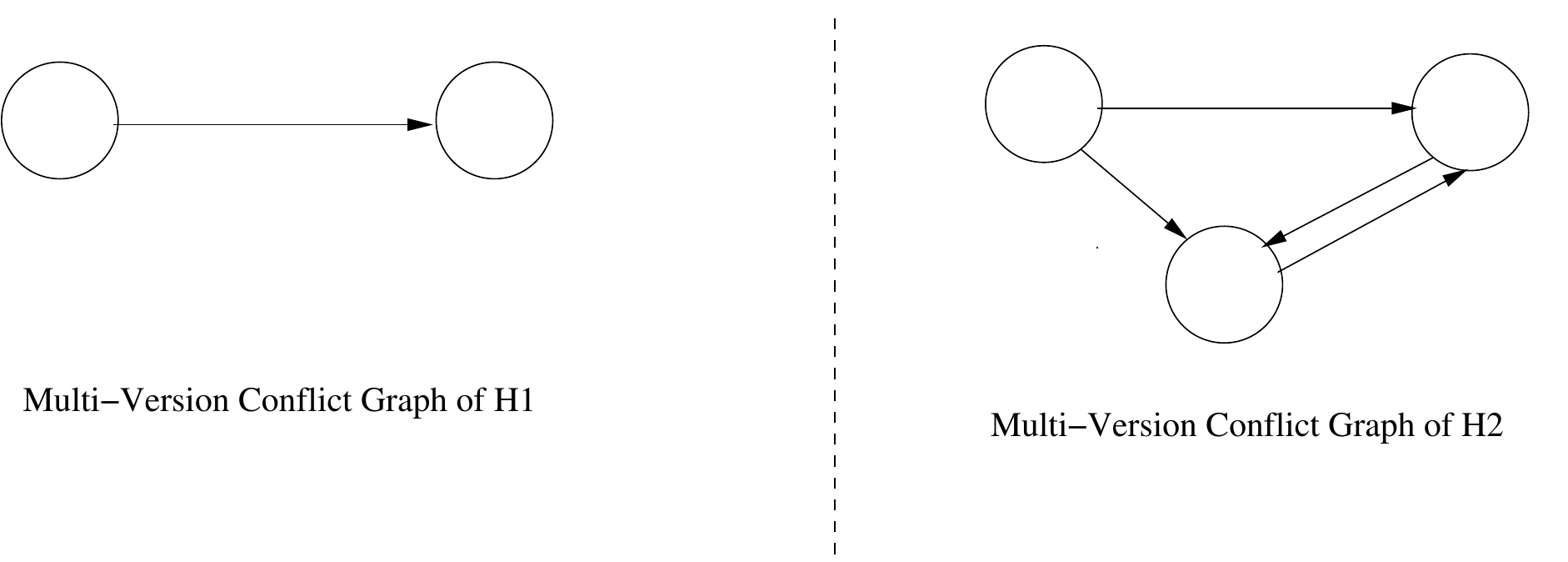_t}}}
\caption{\mvg{s} of $H1$ and $H2$}
\label{fig:mvcgraphs}
\end{figure}

{
\bibliography{citations}

\begin{thebibliography}{10}

\bibitem{attiyaHill:sinmvperm:tcs:2012}
Hagit Attiya and Eshcar Hillel.
\newblock {A {S}ingle-{V}ersion {STM} that is {M}ulti-{V}ersioned
  {P}ermissive}.
\newblock {\em Theory Comput. Syst.}, 51(4):425--446, 2012.

\bibitem{AydAbd:2008:Serial:transact}
Utku Aydonat and Tarek Abdelrahman.
\newblock Serializability of transactions in software transactional memory.
\newblock In {\em TRANSACT~'08: 3rd Workshop on Transactional Computing}, feb
  2008.

\bibitem{dice:2006:tl2:disc}
Dave Dice, Ori Shalev, and Nir Shavit.
\newblock Transactional locking {II}.
\newblock In {\em DISC~'06: Proc. 20th International Symposium on Distributed
  Computing}, pages 194--208, sep 2006.
\newblock Springer-Verlag Lecture Notes in Computer Science volume 4167.

\bibitem{Guer+:disc:2008}
Rachid Guerraoui, Thomas Henzinger, and Vasu Singh.
\newblock Permissiveness in transactional memories.
\newblock In {\em DISC~'08: Proc. 22nd International Symposium on Distributed
  Computing}, pages 305--319, sep 2008.
\newblock Springer-Verlag Lecture Notes in Computer Science volume 5218.

\bibitem{GuerKap:2008:PPoPP}
Rachid Guerraoui and Michal Kapalka.
\newblock On the correctness of transactional memory.
\newblock In {\em PPoPP '08: Proceedings of the 13th ACM SIGPLAN Symposium on
  Principles and practice of parallel programming}, pages 175--184, New York,
  NY, USA, 2008. ACM.

\bibitem{tm-book}
Rachid Guerraoui and Michal Kapalka.
\newblock {\em Principles of Transactional Memory,Synthesis Lectures on
  Distributed Computing Theory}.
\newblock Morgan and Claypool, 2010.

\bibitem{HerlMoss:1993:SigArch}
Maurice Herlihy and J.~Eliot B.Moss.
\newblock Transactional memory: architectural support for lock-free data
  structures.
\newblock {\em SIGARCH Comput. Archit. News}, 21(2):289--300, 1993.

\bibitem{herlihy+:2003:stm-dynamic:podc}
Maurice Herlihy, Victor Luchangco, Mark Moir, and III {William N. Scherer}.
\newblock Software transactional memory for dynamic-sized data structures.
\newblock In {\em PODC~'03: Proc. 22nd ACM Symposium on Principles of
  Distributed Computing}, pages 92--101, Jul 2003.

\bibitem{KuzSat:Corr:2012}
Petr Kuznetsov and Sathya Peri.
\newblock On non-interference of transactions.
\newblock {\em CoRR}, abs/1211.6315, 2012.

\bibitem{KR:2011:OPODIS}
Petr Kuznetsov and Srivatsan Ravi.
\newblock On the cost of concurrency in transactional memory.
\newblock In {\em OPODIS}, pages 112--127, 2011.

\bibitem{PapadKanel:1984:MultVer:TDS}
Christos~H. Papadimitriou and Paris~C. Kanellakis.
\newblock {On Concurrency Control by Multiple Versions}.
\newblock {\em ACM Trans. Database Syst.}, 9(1):89--99, March 1984.

\bibitem{Perel+:2011:SMV:DISC}
Dmitri Perelman, Anton Byshevsky, Oleg Litmanovich, and Idit Keidar.
\newblock {SMV: Selective Multi-Versioning STM}.
\newblock In {\em DISC}, pages 125--140, 2011.

\bibitem{ShavTou:1995:PODC}
Nir Shavit and Dan Touitou.
\newblock Software transactional memory.
\newblock In {\em PODC '95: Proceedings of the fourteenth annual ACM symposium
  on Principles of distributed computing}, pages 204--213, New York, NY, USA,
  1995. ACM.

\bibitem{WeiVoss:2002:Morg}
Gerhard Weikum and Gottfried Vossen.
\newblock {\em Transactional Information Systems: Theory, Algorithms, and the
  Practice of Concurrency Control and Recovery}.
\newblock Morgan Kaufmann, 2002.

\end{thebibliography}
}

\end{document}